\DeclarePairedDelimiter{\ceil}{\lceil}{\rceil}
\renewcommand{\ceil}[1]{\left\lceil{#1}\right\rceil}
\newcommand{\set}[1]{\left\{#1\right\}}
\newcommand{\abs}[1]{\left|#1\right|}
\newcommand{\paren}[1]{\left(#1\right)}
\newcommand{\NC}{Network Construction}
\newcommand{\NCOC}{Network Construction with Ordered Constraints}
\newcommand{\keywords}[1]{\par\addvspace\baselineskip\noindent\keywordname\enspace\ignorespaces#1}
\begin{document}

\mainmatter  

\title{Network Construction with Ordered Constraints}
\titlerunning{Network Construction with Ordered Constraints}

\author{Yi Huang \and Mano Vikash Janardhanan \and Lev Reyzin\thanks{Supported in part by ARO grant 66497-NS.}}
\authorrunning{Yi Huang \and Mano Vikash Janardhanan \and Lev Reyzin}

\institute{University of Illinois at Chicago,\\
\texttt{\{yhuang89, mjanar2, lreyzin\}@uic.edu}\\
\url{}}

\maketitle

\captionsetup[tlongable]{labelfont=bf}
\captionsetup[table]{labelfont=bf}

\begin{abstract}
In this paper, we study the problem of constructing a network by observing ordered connectivity constraints, which we define herein. These ordered constraints are made to capture realistic properties of real-world problems that are not reflected in previous, more general models. We give hardness of approximation results and nearly-matching upper bounds for the offline problem, and we study the online problem in both general graphs and restricted sub-classes. In the online problem, for general graphs, we give exponentially better upper bounds than exist for algorithms for general connectivity problems. For the restricted classes of stars and paths we are able to find algorithms with optimal competitive ratios, the latter of which involve analysis using a potential function defined over pq-trees.
\keywords{graph connectivity, network construction, ordered connectivity constraints, pq-trees}
\end{abstract}

\section{Introduction}\label{sec:introduction}

In this paper, we study the problem of recovering a network after observing how information propagates through the network. Consider how a tweet (through ``retweeting'' or via other means) propagates through the Twitter network -- we can observe the identities of the people who have retweeted it and the timestamps when they did so, but may not know, for a fixed user, via whom he got the original tweet. So we see a chain of users for a given tweet. This chain is semi-ordered in the sense that, each user retweets from some one before him in the chain, but not necessarily the one directly before him. Similarly, when a virus such as Ebola spreads, each new patient in an outbreak is infected from someone who has previously been infected, but it is often not immediately clear from whom.

In a graphical social network model with nodes representing users and edges representing links, an ``outbreak'' illustrated above is captured exactly by the concept of an \textbf{ordered constraint} which we will define formally below. One could hope to be able to learn something about the structure of the network by observing repeated outbreaks, or a sequence of ordered constraints.

Formally we call our problem \textbf{\NCOC} and define it as follows. Let $V=\set{v_1,\dots,v_n}$ be a set of vertices. An \textbf{ordered constraint} $\mathcal{O}$ is an ordering on a subset of $V$ of size $s\geq2$. The constraint $\mathcal{O}=\paren{v_{k_1},\dots, v_{k_{s}}}$ is satisfied if for any $2\leq i\leq s$, there exists at least one $1\leq j<i$ such that the edge $e=\set{v_{k_j}, v_{k_i}}$ is included in a solution. Given a collection of ordered constraints $\set{\mathcal{O}_1,\dots,\mathcal{O}_r}$, the task is to construct a set  $E$ of edges among the vertices $V$ such that all the ordered constraints are satisfied and $|E|$ is minimized.

We can see that our newly defined problem resides in a middle ground between path constraints, which are too rigid to be very interesting, and the well-studied subgraph connectivity constraints~\cite{angluinAR2015network,korachS2003clustering,korachS2008complete}, which are more relaxed. 
The established subgraph connectivity constraints problem involves getting an arbitrary collection of connectivity constraints $\set{S_1,\dots, S_{r}}$ where each $S_i\subset V$ and requires vertices in a given constraint to form a connected induced subgraph. The task is to construct a set  $E$ of edges satisfying the connectivity constraints such that $|E|$ is minimized.

We want to point out one key observation relating the ordered constraint to the connectivity constraint -- an ordered constraint $\mathcal{O}=\paren{v_{k_1},\dots, v_{k_s}}$ is equivalent to $s-1$ connectivity constraints $S_2,\dots, S_{s}$, where $S_i=\set{v_{k_1},\dots, v_{k_i}}$. We note that this observation plays an important role in several proofs in this paper which employ previous results on subgraph connectivity constraints -- in particular, upper bounds from the more general case can be used in the ordered case (with some overhead), and our lower bounds apply to the general problem.

In the offline version of the {\NCOC} problem, the algorithm is given all of the constraints all at once; in the \textbf{online} version of the problem, the constraints are given one by one to the algorithm, and edges must be added to satisfy each new constraint when it is given.  Edges cannot be removed.

An algorithm is said to be $\mathbf{c}$-\textbf{competitive} if the cost of its solution is less than $c$ times OPT, where OPT is the best solution in hindsight ($c$ is also called the competitive ratio).
When we restrict the underlying graph in a problem to be a class of graphs, e.g.~trees, we mean all the constraints can be satisfied, in an optimal solution (for the online case, in hindsight), by a graph from that class. 

\subsection{Past Work}\label{subsec:pastWork}

In this paper we study the problem of network construction from ordered constraints. This is an extension of the more general model where constraints come unordered.

For the general problem, Korach and Stern~\cite{korachS2003clustering} had some of the initial results, in particular for the case where the constraints can be optimally satisfied by a tree, they give a polynomial time algorithm that finds the optimal solution. In subsequent work, in~\cite{korachS2008complete} Korach and Stern considered this problem for the even more restricted problem where the optimal solution forms a tree, and all of the connectivity constraints must be satisfied by stars.

Then, Angluin~et~al.~\cite{angluinAR2015network} studied the general problem, where there is no restriction on structure of the optimal solution, in both the offline and online settings. In the offline case, they gave nearly matching upper and lower bounds on the hardness of approximation for the problem. In the online case, they give a $O(n^{2/3}\log^{2/3}n)$-competitive algorithm against oblivious adversaries; we show that this bound can be drastically improved in the ordered version of the problem. They also characterized special classes of graphs, i.e.~stars and paths, which we are also able to do herein for the ordered constraint case. Independently of that work, Chockler~et~al.~\cite{chocklerMTV2007constructing} also nearly characterized the offline general case.

In a different line of work Alon~et~al.~\cite{alonAABN2006general} explore a wide range of network optimization problems; one problem they study involves ensuring that a network with fractional edge weights has a flow of $1$ over cuts specified by the constraints.  Alon~et~al.~\cite{alonAA2003online} also study approximation algorithms for the Online Set Cover problem which have been shown by Angluin~et~al~\cite{angluinAR2015network} to have connections with Network Construction problems.

In related areas, Gupta~et~al.~\cite{guptaKR2012online} considered a network design problem for pairwise vertex connectivity constraints.
Moulin and Laigret~\cite{MoulinL11} studied network connectivity constraints from an economics perspective. Another motivation for studying this problem is to discover social networks from observations. This and similar problems have also been studied in the learning context~\cite{angluinAR2008optimally,angluinAR2010inferring,gomez-RodriguezLK2012inferring,saitoNK2008prediction}. 

Finally, in query learning, the problem of discovering networks from connectivity queries has been much
studied~\cite{alonA2005learning,alonBKRS2004learning,angluinC2008learning,beigelAKAF2001optimal,grebinskiK1998reconstructing,reyzinS2007learning}. In active learning of hidden networks, the object of the algorithm is to learn the network exactly. Our model is similar, except the algorithm only has the constraints it is given, and the task is to output the cheapest network consistent with the constraints.

\subsection{Our results}\label{subsec:result}
In Section~\ref{sec:offline}, we examine the offline problem, and show that the {\NC} problem is NP-Hard to approximate within a factor of $\Omega(\log n)$. A nearly matching upper bound comes from {Theorem 2} of \cite{angluinAR2015network}.

In Section~\ref{sec:online}, we study online problem. For problems on $n$ nodes, for $r$ constraints, we give an $O\paren{(\log r+\log n)\log n}$ competitive algorithm against oblivious adversaries, and an $\Omega(\log n)$ lower bound (Section~\ref{subsec:onlineGeneral}). 

Then, for the special cases of stars and paths (Sections~\ref{subsubsec:onlineStars} and~\ref{subsubsec:paths}), we find asymptotic optimal competitive ratios of $3/2$ and $2$, respectively. The proof of the latter uses a detailed analysis involving pq-trees~\cite{boothL1976testing}. The competitive ratios are asymptotic in $n$.

\section{The offline problem}\label{sec:offline}
In this section, we examine the {\NCOC} problem in the offline case.  We are able to obtain the same lower bound as Angluin~et~al.~\cite{angluinAR2015network} in the general connectivity constraints case.
\begin{theorem}
\label{thm:offlineApprox}
If P$\neq$NP, the approximation ratio of the \textbf{\NCOC} problem is $\Omega{(\log {n})}$.
\end{theorem}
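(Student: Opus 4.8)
The plan is to derive the bound from the standard NP-hardness of approximating Set Cover within $\Omega(\log n)$ (Lund--Yannakakis; Raz--Safra), following the template Angluin et al.~\cite{angluinAR2015network} use for general connectivity constraints, but arranging for every constraint produced by the reduction to be an ordered constraint. Recall that, unless P$=$NP, there is a constant $c>0$ such that no polynomial-time algorithm can distinguish a Set Cover instance over an $m$-element universe that admits a cover of size $\le k$ from one whose smallest cover has size $>c\,k\ln m$, and the hard instances have a number of sets polynomial in $m$. I would map such an instance $(U,\mathcal F)$ to an {\NCOC} instance on $n=\mathrm{poly}(m)$ vertices so that the optimum number of edges is $g+\mathrm{OPT}_{\mathrm{SC}}$ (or a constant multiple thereof), where $g$ is a fixed ``infrastructure cost'' small compared with $\mathrm{OPT}_{\mathrm{SC}}$ on the hard instances; then the $\Theta(\log m)$ multiplicative gap survives as a $\Theta(\log n)$ gap.

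Concretely, I would use a hub vertex $h$, one vertex $c_F$ for each set $F\in\mathcal F$, and a small amount of forced infrastructure (e.g.\ size-two ordered constraints $(x,y)$ that force a cheap spanning structure on the $c_F$'s and on a few auxiliary vertices), and let the edge $\{h,c_F\}$ encode ``purchasing set $F$''. For each element $u$ I would then write a single ordered constraint whose vertex list interleaves $h$, the set-vertices $c_F$ with $u\in F$, and some auxiliary vertices, with the ordering chosen so that (i) all the edges the constraint \emph{forces} --- in an ordered constraint, the edge from the second listed vertex to the first, and more generally a backward edge out of every listed vertex --- are already provided by the infrastructure or are cheap and shared, and (ii) the cheapest way to meet the part of the constraint involving $h$ is to attach $h$ to some $c_F$ with $u\in F$, i.e.\ to buy a set containing $u$. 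A size-$k$ cover then yields a feasible edge set of size $g+k$, and conversely, reading off which $\{h,c_F\}$ edges a feasible edge set of size $N$ contains yields a cover of size at most $N-g$ (up to constants).

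The main obstacle is getting both properties to hold at once: because an ordered constraint forces a backward edge out of each of its vertices, a careless ordering either forces some $\{h,c_F\}$ edges outright --- which would make the reduction itself commit to a cover and kill the gap --- or forces $\Theta(|\{F:u\in F\}|)$ fresh edges per element, whose total could swamp $\mathrm{OPT}_{\mathrm{SC}}$; steering between these (possibly via a more elaborate gadget, via amplification, or by choosing the hard Set Cover family so that $\mathrm{OPT}_{\mathrm{SC}}$ dominates the infrastructure cost) is the crux, and everything afterward is bookkeeping: since $n=\mathrm{poly}(m)$ we have $\log n=\Theta(\log m)$, so a polynomial-time $o(\log n)$-approximation for {\NCOC} would separate the two Set Cover cases, contradicting P$\neq$NP. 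Finally, since every ordered constraint is, by the observation in Section~\ref{sec:introduction}, equivalent to a nested family of ordinary connectivity constraints, the very same instances witness the bound for the general {\NC} problem, recovering the Angluin et al.\ lower bound.
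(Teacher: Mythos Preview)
Your high-level plan---an approximation-preserving reduction from Set Cover/Hitting Set, with some fixed infrastructure plus ``purchase'' edges---is exactly right, and you have correctly identified the obstacle: an ordered constraint forces a backward edge out of every listed vertex, so a careless ordering either forces the purchase edges outright or buries $\mathrm{OPT}_{\mathrm{SC}}$ under infrastructure. But you stop at the obstacle; the proposal never gives a construction that actually clears it. The single hub $h$ in your sketch is the sticking point: with one hub, the optimum is $g+\mathrm{OPT}_{\mathrm{SC}}$ with $g=\Theta(|\mathcal F|^2)$ (or similar), and on hard instances $g$ swamps $\mathrm{OPT}_{\mathrm{SC}}$, so an $o(\log n)$ approximation to NCOC tells you nothing about Set Cover. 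You mention amplification as one option among several, but amplification is not optional here---it is the whole mechanism.

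Here is how the paper dissolves the obstacle, and it is simpler than your interleaving-with-auxiliaries idea. Work in the Hitting Set formulation: universe $U$, family $\mathcal S$. Take vertex set $U\cup W$ with $|W|=n^c$ for some $c>2$. Force a clique on $U$ with size-two constraints $(u_i,u_j)$. Then for every $S_k\in\mathcal S$ and every $w\in W$ give the ordered constraint $(\text{vertices of }S_k\text{ in any order},\ w)$. The point is that the ``special'' vertex $w$ appears \emph{last}: the only backward edge forced for $w$ is a single edge to some member of $S_k$, which is precisely the hitting condition, and every other backward edge forced by the constraint lies inside the clique on $U$ and is already paid for. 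Thus the edges incident to each $w$ in any feasible solution form a hitting set, $\mathrm{OPT}\le\binom{n}{2}+n^c\,\mathrm{OPT}_{\mathrm H}$, and taking the $w$ with fewest incident edges extracts a hitting set of size at most $(\mathrm{ALG}-\binom{n}{2})/n^c$. The $n^c$ copies of $w$ are the amplification that makes the clique cost negligible, so an $o(\log n)$ NCOC approximation yields an $o(\log n)$ Hitting Set approximation. The two missing ingredients in your write-up, then, are (i) putting the special vertex last so the forced backward edge \emph{is} the covering edge, and (ii) using polynomially many special vertices rather than one.
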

\begin{proof}
We prove the theorem by reducing from the Hitting Set problem. Let $(U,\mathcal{S})$ be a hitting set instance, where $U=\set{u_1,\dots,u_n}$ is the universe, and $\mathcal{S}=\set{S_1,\dots,S_m}$ is a set of subsets of $U$. A subset $H\subset {U}$ is called a hitting set if $H\cap S_i\neq\emptyset$. The objective of the Hitting Set problem is to minimize $\abs{H}$. We know from \cite{feigeU1998threshold,razS1997sub} that the Hitting Set problem cannot be approximated by any polynomial time algorithm within a ratio of $o(\log{n})$ unless P$=$NP. Here we show that the {\NC} problem is inapproximable better than an $O(\log n)$ factor by first showing that we can construct a corresponding {\NC} instance to any given Hitting Set instance, and then showing that if there is a polynomial time algorithm that can achieve an approximation ratio $o(\log{n})$ to the {\NC} problem, then the Hitting Set problem can also be approximated within in a ratio of $o(\log{n})$, which is a contradiction.

We first define a {\NC} instance, corresponding to a given Hitting Set instance $(U, \mathcal{S})$, with vertex set $U \cup W$, where $W=\set{w_1,\dots,w_{n^c}}$ for some $c>2$. Note that we use the elements of the universe of hitting set instance as a part of the vertex set of {\NC} instance. The ordered constraints are the union of the following two sets:
\begin{itemize}
\item $\set{(u_i, u_j)}_{1\leq i<j\leq n}$;
\item $\set{(S_k, w_l)}_{S_k\in\mathcal{S}, 1\leq l\leq n^c}$,
\end{itemize}
where by $(S_k, w_l)$ we mean an ordered constraint with all vertices except the last one from a subset $S_k$ of $U$, while the last vertex $w_l$ is an element in $W$. The vertices from $S_k$ are ordered arbitrarily. 

We note that the first set of ordered constrains forces a complete graph on $U$, and the second set of ordering demands that there is at least one edge going out from each $S_k$ connecting each element in $W$. More specifically let $E_l$ denote the set of edges incident to $w_l$ belonging to any solution to the {\NC} instance. Because of the second set of ordered constraints, the set $H_l=\set{u\in U|\set{u,w_l}\in E_l}$ is a hitting set of $\mathcal{S}$! 

Let $H\subset U$ be any optimal solution to the hitting set instance, and denote by $\mathrm{OPT}_\mathrm{H}$ the size of $H$, it is easy to see the two sets of ordered constraints can be satisfied by putting a complete graph on $U$ and a complete bipartite graph between $H$ and $W$. Hence the optimal solution to the {\NC} instance satisfies 
\[
	\mathrm{OPT}\leq\binom{n}{2}+n^c\ \mathrm{OPT}_\mathrm{H},
\]
where $\mathrm{OPT}$ is the minimum number of edges needed to solve the {\NC} instance. Let us assume that there is a polynomial time approximation algorithm to the {\NC} problem that adds $\mathrm{ALG}$ edges. Without loss of generality we can assume that the algorithm add no edge among vertices in $W$, because any edge within $W$ can be removed without affecting the correctness of the solution, which implies that $\mathrm{ALG}=\binom{n}{2}+\sum_{l}^{n^c}|E_l|$. Now if $\mathrm{ALG}$ is in the order $o\paren{\log{n}\mathrm{OPT}}$, from the fact that $|H_l|=|E_l|$, we get
\begin{eqnarray*}
\min_{1\leq l\leq n^c}|H_l|\leq \frac{\mathrm{ALG}-\binom{n}{2}}{n^{c}}&=&\frac{{o\paren{\log{n}\paren{\binom{n}{2}+n^c\ \mathrm{OPT}_\mathrm{H}}}-\binom{n}{2}}}{n^c}\\
&=&o\paren{\log{n}\ \mathrm{OPT}_\mathrm{H}},
\end{eqnarray*}
which means by finding the smallest set $H_{l_0}$ among all the $H_l$s, we get a hitting set that has size within an $o(\log{n})$ factor of the optimal solution to the Hitting Set instance, which is a contradiction.
\qed
\end{proof}

We also observe that the upper bound from the more general problem implies a bound in our ordered case.
We note the upper and lower bounds match when $r = poly(n)$.

\begin{corollary}[of Theorem 2 from Angluin~et~al.~\cite{angluinAR2015network}]
There is a polynomial time $O(\log r+\log n)$-approximation algorithm for the \NCOC problem on $n$ nodes and $r$ constraints.
\end{corollary}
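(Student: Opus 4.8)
The plan is to reduce the \NCOC{} problem to the general subgraph connectivity constraint problem, invoke Theorem~2 of Angluin~et~al.~\cite{angluinAR2015network} as a black box, and check that the reduction blows up the number of constraints by only a polynomial factor, so that a logarithmic approximation guarantee is preserved.

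First I would use the key observation recorded in Section~\ref{sec:introduction}: an ordered constraint $\O=(v_{k_1},\dots,v_{k_s})$ is satisfied by an edge set $E$ if and only if each of the $s-1$ prefix sets $S_i=\set{v_{k_1},\dots,v_{k_i}}$, for $2\le i\le s$, induces a connected subgraph of $(V,E)$. The reverse implication holds because in a connected graph on at least two vertices every vertex has a neighbor, so $v_{k_i}$ has a neighbor inside $S_i\setminus\set{v_{k_i}}=\set{v_{k_1},\dots,v_{k_{i-1}}}$, which is exactly what the ordered constraint demands; the forward implication is an easy induction on $i$, since $S_2$ is connected because the constraint forces the edge $\set{v_{k_1},v_{k_2}}$, and if $S_{i-1}$ is connected then $v_{k_i}$ attaches to it through whichever earlier vertex it is joined to. Consequently, given a collection $\set{\O_1,\dots,\O_r}$ of ordered constraints on $n$ vertices, replacing each $\O_t$ by its family of prefix connectivity constraints yields a subgraph connectivity instance on the same vertex set with exactly the same set of feasible edge sets, hence the same optimum $\mathrm{OPT}$.

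Next I would bound the size of the new instance: each ordered constraint has at most $n$ vertices and thus spawns at most $n-1$ connectivity constraints, so the derived instance has $N\le r(n-1)$ connectivity constraints, which is polynomial in the input size. Running the $O(\log N+\log n)$-approximation algorithm of Theorem~2 of \cite{angluinAR2015network} on it therefore takes polynomial time and returns an edge set $E$ that, by the equivalence above, is feasible for the original \NCOC{} instance, with $\abs{E}=O\big((\log N+\log n)\,\mathrm{OPT}\big)$. Since $\log N\le\log(rn)=\log r+\log n$, this is $O\big((\log r+\log n)\,\mathrm{OPT}\big)$, as claimed.

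There is essentially no hard step here. The only points that need care are that the prefix-set reduction preserves the optimum exactly rather than only approximately, and that it enlarges the instance by at most a factor of $n$ in the number of constraints; both follow immediately from the observation and the trivial count above.
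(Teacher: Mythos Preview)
Your proof is correct and follows the same route as the paper: replace each ordered constraint by its at most $n-1$ prefix connectivity constraints, then invoke Angluin~et~al.'s Theorem~2 on the resulting instance of at most $rn$ unordered constraints to obtain an $O(\log(rn))=O(\log r+\log n)$ approximation. The paper's proof is terser (it takes the prefix equivalence as already established in the introduction), and states the black-box guarantee as $O(\log r)$ in the number of unordered constraints rather than your $O(\log N+\log n)$, but the argument is identical.
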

\begin{proof}
Observing that $r$ ordered constraints imply at most $nr$ unordered constraints on a graph with $n$ nodes, we can use the $O(\log r)$ upper bound from Angluin~et~al.~\cite{angluinAR2015network}.
\qed
\end{proof}

\section{The online problem}\label{sec:online}

Here, we study the online problem, where constraints come in one at a time, and the algorithm must satisfy them by adding edges as the constraints arrive.

\subsection{Arbitrary graphs}\label{subsec:onlineGeneral}
\begin{theorem}\label{thm:onlineGeneral}
The competitive ratio for \textbf{Online \NCOC} problem on $n$ nodes and $r$ ordered constraints has an upper bound of $O\paren{(\log r+\log n)\log n}$ against an oblivious adversary.
\end{theorem}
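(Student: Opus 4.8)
The plan is to reduce \textbf{Online \NCOC}\ to the Online Set Cover problem and invoke the algorithm of Alon~et~al.~\cite{alonAA2003online}. First I would break each ordered constraint into atomic demands: by the observation in Section~\ref{sec:introduction}, a constraint $\mathcal{O}_j=(v_{k_1},\dots,v_{k_{s_j}})$ is satisfied exactly when, for every $2\le i\le s_j$, the solution contains some edge $\{v_{k_i},v_{k_\ell}\}$ with $\ell<i$. Call this atomic demand $R_{j,i}$ and let $T_{j,i}=\{v_{k_1},\dots,v_{k_{i-1}}\}$ be its prefix; an edge $e$ \emph{covers} $R_{j,i}$ precisely when $e=\{v_{k_i},u\}$ for some $u\in T_{j,i}$. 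Note this covering relation is fully determined the moment $\mathcal{O}_j$ is revealed.

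Next I would set up a set-cover instance whose ground set is the collection of all atomic demands that are ever issued, and whose family of sets is the $\binom{n}{2}$ candidate edges, where the set for edge $e$ is $\{\,R_{j,i}\ :\ e\text{ covers }R_{j,i}\,\}$. The point is that a set of edges satisfies all issued ordered constraints \emph{if and only if} the corresponding sets cover all issued demands --- covers and feasible edge sets are literally the same objects --- so the optimum of this set-cover instance equals $\mathrm{OPT}$ of the \NCOC\ instance, with no loss. Since each of the $r$ constraints issues at most $s_j-1<n$ demands, the ground set has size at most $rn$, and there are $\binom{n}{2}<n^2$ sets.

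For the online reduction, when $\mathcal{O}_j$ arrives I would present its demands $R_{j,2},\dots,R_{j,s_j}$ one at a time to the Online Set Cover algorithm (feeding a finite batch of new elements sequentially to an online algorithm is legitimate and does not inflate its guarantee), and whenever that algorithm selects a set I add the corresponding edge --- each edge is added at most once and no edge is ever removed, as the model requires; a demand that is already covered by previously added edges triggers nothing, which is exactly the desired behaviour. Alon~et~al.~\cite{alonAA2003online} give a randomized Online Set Cover algorithm that is $O(\log a\cdot\log b)$-competitive against oblivious adversaries on $a$ elements and $b$ sets; with $a\le rn$ and $b\le\binom{n}{2}$ this is $O\paren{\log(rn)\,\log(n^2)}=O\paren{(\log r+\log n)\log n}$, and by the optimality-preserving reduction the same competitive ratio holds for Online \NCOC.

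The main point to get right is the exactness of the reduction (both that every edge solution is a cover and, crucially for the competitive ratio, that every cover is an edge solution, so the set-cover optimum is not smaller than $\mathrm{OPT}$) together with the bookkeeping that one arriving constraint corresponds to a batch of arriving set-cover elements. It is also worth remarking why this beats the $O(n^{2/3}\log^{2/3}n)$ bound known for general connectivity constraints: an ordered constraint decomposes into atomic ``attach $v_{k_i}$ to the known prefix $T_{j,i}$'' demands, each of which is a genuine set-cover element, whereas a general unordered connectivity constraint admits no such decomposition.
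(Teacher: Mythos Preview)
Your reduction to Online Set Cover is correct. The crucial point---that for an \emph{ordered} constraint the requirement on $v_{k_i}$ is literally ``pick at least one edge from $v_{k_i}$ into the prefix $T_{j,i}$''---makes each atomic demand a genuine set-cover element, and the correspondence between feasible edge sets and covers is indeed a bijection, so the optima coincide exactly. Feeding the demands in order to the Alon~et~al.\ algorithm and adding the selected edges is a valid online algorithm, and the parameter count ($\le rn$ elements, $\le\binom{n}{2}$ sets) yields the stated $O((\log r+\log n)\log n)$ bound.

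The paper arrives at the same bound by a different route: it passes through the \emph{Fractional Network Construction} relaxation, maintains fractional edge weights by multiplicative updates (Lemma~2 of Angluin~et~al.), and then applies Buchbinder--Naor style randomized rounding with $t=\Theta(\log n+\log r)$ independent thresholds per edge; feasibility is argued via max-flow/min-cut on the singleton cuts $(\{v_{i1},\dots,v_{i,j-1}\},\{v_{ij}\})$. Your approach and the paper's are really two packagings of the same mechanism---multiplicative weights plus rounding is precisely what the online set-cover algorithm does internally---but your black-box reduction is cleaner: it isolates the one structural fact that matters (each sub-constraint is a single-vertex cut, hence a covering demand) and avoids re-deriving the rounding analysis. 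A side benefit worth noting is that Alon~et~al.'s algorithm is deterministic, so your argument in fact yields the same competitive ratio against adaptive adversaries, slightly strengthening the paper's oblivious-adversary statement.
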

\begin{proof}

To prove the statement, we first define the \textbf{Fractional Network Construction} problem, which has been shown by Angluin~et~al.~\cite{angluinAR2015network} to have an $O(\log n)$-approximation algorithm. The upper bound is then obtained by applying a probabilistic rounding scheme to the fractional solution given by the approximation. The proof heavily relies on arguments developed by Buchbinder and Noar~\cite{buchbinderN2009design}, and Angluin~et~al.~\cite{angluinAR2015network}.

In the \textbf{Fractional Network Construction} problem, we are also given a set of vertices and a set of constraints $\set{S_1,\dots,S_r}$ where each $S_i$ is a subset of the vertex set. Our task is to assign weights $w_e$ to each edge $e$ so that the maximum flow between each pair of vertices in $S_i$ is at least $1$. The optimization problem is to minimize $\sum w_e$. Since subgraph connectivity constraint is equivalent to requiring a maximum flow of $1$ between each pair of vertices with edge weight $w_e\in\set{0,1}$, the fractional network construction problem is the linear relaxation of the subgraph connectivity problem. Lemma~2 of Angluin~et~al.~\cite{angluinAR2015network} gives an algorithm that multiplicatively updates the edge weights until all the flow constraints are satisfied. It also shows that the sum of weights given by the algorithm is upper bounded by $O(\log n)$ times the optimum. 

As we pointed out in the introduction, an ordered constraint $\mathcal{O}$ is equivalent to a sequence of subgraph connectivity constraints. So in the first step, we feed the $r$ sequences of connectivity constraints, each one is equivalent to an ordered constraint, to the approximation algorithm to the fractional network construction problem and get the edge weights. Then we apply a rounding scheme similar to the one considered by Buchbinder and Noar~\cite{buchbinderN2009design} to the weights. For each edge $e$, we choose $t$ random variables $X(e, i)$ independently and uniformly from $[0,1]$, and let the threshold $T(e)=\min_{i=1}^{t}X(e,i)$. We add $e$ to the graph if $w_e\geq T(e)$.

Since the rounding scheme has no guarantee to produce a feasible solution, the first thing we need to do is to determine how large $t$ should be to make all the ordered constraints satisfied with high probability.

We note that an ordered constraint $\mathcal{O}_i=\set{v_{i1}, v_{i2},\dots,v_{i{s_i}}}$ is satisfied if and only if the $(s-1)$ connectivity constraints $\set{v_{i1},v_{i2}}$, $\dots$, $\set{v_{i1},\dots, v_{i{s_i-1}},v_{i{s_i}}}$ are satisfied which is equivalent, in turn, to the fact that there is an edge that goes across the $\paren{\set{v_{i1},\dots, v_{i{j-1}}}, \set{v_{ij}}}$ cut, for $2\leq j\leq s_i$. For any fixed cut $C$, the probability the cut is not crossed equals $\prod_{e\in C}(1-w_e)^{t}\leq \exp\paren{-t\sum_{e\in C}w_e}.$ By the max-flow min-cut correspondence, we know that $\sum_{c\in C}w_e\geq 1$ in the fractional solution given by the approximation algorithm for all cut $C=\paren{\set{v_{i1},\dots, v_{i{j-1}}}, \set{v_{ij}}}$, $1\leq i \leq r$, $2\leq j\leq s_i$, and hence the probability that there exists at least one unsatisfied $\mathcal{O}_i$ is upper bounded by $rn\exp\paren{-t}$. So $t=c(\log n+\log r)$, for any $c>1$, makes the probability that the rounding scheme fails to produce a feasible solution approaches $0$ as $n$ increases.  

Because the probability that $e$ is added equals the probability that at least one $X(e,i)$ is less than $w_e$, and hence is upper bounded by $w_et$, we get the expected number of edges added is upper bounded by $t\sum{w_e}$ by linearity of expectation. Since the fractional solution is upper bounded by $O(\log n)$ times the optimum of the fractional problem, which is upper bounded by any integral solution, our rounding scheme gives a solution that is $O\paren{(\log r+\log n)\log n}$ times the optimum. 
\qed
\end{proof}

\begin{corollary}
If the number of ordered constraints $r=\mathrm{poly}(n)$, then the algorithm above gives a $O\paren{(\log n)^2}$ upper bound for the competitive ratio against an oblivious adversary.
\end{corollary}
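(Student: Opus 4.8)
The plan is to derive this directly from Theorem~\ref{thm:onlineGeneral} by substituting the hypothesis $r = \mathrm{poly}(n)$ into the competitive ratio bound. First I would unpack what $r = \mathrm{poly}(n)$ means: there is a constant $k$ (independent of $n$) and a threshold $n_0$ such that $r \leq n^k$ for all $n \geq n_0$. Taking logarithms gives $\log r \leq k \log n$, so $\log r = O(\log n)$ and consequently $\log r + \log n = O(\log n)$.

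Next I would plug this into the bound from Theorem~\ref{thm:onlineGeneral}, namely $O\paren{(\log r + \log n)\log n}$, which becomes $O\paren{(\log n)(\log n)} = O\paren{(\log n)^2}$. Since the same rounding-based online algorithm referenced in Theorem~\ref{thm:onlineGeneral} is the one under discussion, and its guarantee holds against an oblivious adversary, the claimed competitive ratio of $O((\log n)^2)$ against an oblivious adversary follows immediately, provided one also checks that the value $t = c(\log n + \log r)$ chosen in that proof is still $\Theta(\log n)$ under the polynomial bound on $r$ (it is, by the same computation), so the feasibility-with-high-probability argument is unaffected.

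There is no real obstacle here: the corollary is a one-line specialization of the theorem, and the only thing worth stating carefully is that the polynomial bound on $r$ is exactly what collapses the $\log r$ term into the $\log n$ term, both in the competitive ratio and in the choice of the rounding parameter $t$.
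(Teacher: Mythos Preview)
Your proposal is correct and matches the paper's approach: the paper states this corollary without any proof, treating it as the immediate specialization of Theorem~\ref{thm:onlineGeneral} obtained by substituting $\log r = O(\log n)$ when $r = \mathrm{poly}(n)$. Your additional remark about the rounding parameter $t$ is a harmless elaboration, but it is not needed since the competitive-ratio bound of the theorem already absorbs that dependence.
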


\begin{remark}
We can generalise theorem \ref{thm:onlineGeneral} to the weighted version of the Online {\NCOC}  problem. In the weighted version, each edge $e=(u,v)$ is associated with a cost $c_e$ and the task is to select edges such that the connectivity constraints are satisfied and $\sum c_e w_e$ is minimised where $w_e\in\{0,1\}$ is a variable indicating whether an edge is picked or not and $c_e$ is the cost of the edge. The same approach in the proof of Theorem~\ref{thm:onlineGeneral} gives an upper bound of $O\paren{(\log r+\log n)\log n}$ for the competitive ratio of the weighted version of the Online {\NCOC} problem.
\end{remark}

\begin{theorem}
This is a $\Omega(\log n)$ lower bound for the competitive ratio for the \textbf{Online \NCOC} problem against an oblivious adversary.
\end{theorem}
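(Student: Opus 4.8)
The plan is to mirror the offline argument of Theorem~\ref{thm:offlineApprox}, but to replace the \emph{computational} hardness of Hitting Set by the known $\Omega(\log n)$ lower bound on the competitive ratio of online set cover (equivalently, online hitting set) against oblivious adversaries~\cite{alonAA2003online}. Concretely, I would fix a hard input distribution $\mathcal{D}$ for online hitting set on a universe $U$ with $|U|=N$ and a family $\mathcal{S}=\{S_1,\dots,S_m\}$, $m=\mathrm{poly}(N)$, on which every deterministic online algorithm has expected cost $\Omega(\log N)$ times the expected optimum. Then I would build an Online {\NCOC} instance on the vertex set $U\cup W$, $W=\{w_1,\dots,w_{N^c}\}$ for a constant $c>2$, exactly as in the offline reduction: first present the $\binom{N}{2}$ size-two constraints $\{(u_i,u_j)\}_{i<j}$ forcing the clique on $U$, then, for each $l$, sample an independent run of $\mathcal{D}$ and feed its sets one at a time as ordered constraints $(S_k,w_l)$ (the vertices of $S_k$ in arbitrary order, followed by $w_l$).

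The key local observation is exactly the one used in Theorem~\ref{thm:offlineApprox}: once the clique on $U$ is present, the ``prefix chain'' inside a constraint $(S_k,w_l)$ is automatically satisfied, so the only requirement $(S_k,w_l)$ newly imposes is that $w_l$ be adjacent to \emph{some} vertex of $S_k$. Hence, writing $E_l$ for the set of $U$-neighbours the algorithm assigns to $w_l$, at every moment $E_l$ must be a hitting set of the sets revealed so far for $w_l$, and---after discarding edges inside $W$, which are useless as in the offline proof---we have $\mathrm{ALG}=\binom{N}{2}+\sum_l|E_l|$. Restricted to the constraints mentioning $w_l$, any deterministic algorithm is simply an online hitting-set algorithm fed an independent draw from $\mathcal{D}$; moreover the only edges shared between two ``copies'' $l\neq l'$ are the $U$-clique edges, which help no $(S_k,w_{l'})$ constraint, so the hardness of $\mathcal{D}$ applies copy by copy: $\mathbb{E}[|E_l|]\ge\Omega(\log N)\,\mathbb{E}[\mathrm{OPT}_l]$. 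Since an optimal {\NCOC} solution consists of the clique together with, for each $l$, an optimal hitting set realised as $w_l$--$U$ edges, we get $\mathbb{E}[\mathrm{OPT}]=\binom{N}{2}+\sum_l\mathbb{E}[\mathrm{OPT}_l]$; as $c>2$ the $W$-part dominates $\binom{N}{2}$, so the competitive ratio is $\Omega(\log N)$. Finally $n=|U\cup W|=\Theta(N^c)$ gives $\log N=\Theta(\log n)$, and the instance uses $\mathrm{poly}(n)$ constraints, placing it in the regime where our upper bound is $O\paren{(\log n)^2}$.

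The only nontrivial external ingredient is the $\Omega(\log n)$ oblivious-adversary lower bound for online set cover, which I would invoke essentially as a black box from~\cite{alonAA2003online} (ultimately resting on the primal--dual framework of~\cite{buchbinderN2009design}). The step I expect to require the most care is the interface: verifying that stacking $N^c$ \emph{independent} copies of the hard distribution on disjoint target vertices over a single shared clique genuinely forces $\Omega(\log n)$ expected edges per copy, and that the $\binom{N}{2}$ clique edges are thereby diluted to lower order, exactly as the $n^c$ copies of $W$ dilute the clique in the proof of Theorem~\ref{thm:offlineApprox}. A minor subtlety is that the optimum of the hard distribution need not be a fixed constant, which is why one keeps $\sum_l\mathbb{E}[\mathrm{OPT}_l]$ rather than $N^c\cdot\mathrm{OPT}$ in the denominator.
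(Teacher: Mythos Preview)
Your reduction is correct, and its architecture---a clique on a small part $U$ together with many ``satellite'' vertices, each of which is forced to play an independent online hitting-set game against $U$---is exactly the architecture the paper uses. The difference is that the paper does not invoke online set cover as a black box: it takes $|U|=\sqrt{n}$, and for each satellite $v$ it fixes a uniformly random permutation $\pi_v$ of $U$ and presents the \emph{nested} ordered constraints $(\pi_v(1),\dots,\pi_v(i),v)$ in the order $i=\sqrt{n},\sqrt{n}-1,\dots,1$. The optimum is then exactly one edge per satellite (namely $\{\pi_v(1),v\}$), while any algorithm must in expectation add $\Omega(\log\sqrt{n})=\Omega(\log n)$ edges per satellite, since each edge it buys halves the set of still-unsatisfied constraints only in expectation. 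So the paper is self-contained and more elementary---it effectively reproves the simplest $\Omega(\log N)$ online hitting-set lower bound in situ---whereas your route is more modular and would transfer directly if one wanted to import a stronger set-cover lower bound. Both arguments balance the clique cost against the satellite cost in the same way (the paper via $|U|=\sqrt{n}$, you via $|W|=N^c$ with $c>2$), and both yield an oblivious adversary.
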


\begin{proof}
The adversary divides the vertex set into two parts $U$ and $V$, where $|U|=\sqrt{n}$ and $|V|=n-\sqrt{n}$, and gives the constraints as follows. Firstly, it forces a complete graph in $U$ by giving the constraint $\set{u_i,u_j}$ for each pair of vertices $u_i,u_j\in U$. At this stage both the algorithm and optimal solution will have a clique in $U$, which costs $\Theta(n)$.

Then, for each $v\in V$, first fix a random permutation $\pi_v$ on $U$ and give the ordered constraint
\[
\mathcal{O}_{(v,i)}=\paren{\pi_v(1),\pi_v(2),\ldots,\pi_v(i), v}.
\]
First note that all these constraints can be satisfied by adding $e_v=\set{\pi_v(1), v}$ for each $v\in V$ which costs $\Theta(n)$. However, the adversary gives constraints in the following order:  
\[
\mathcal{O}_{(v,\sqrt{n})},\mathcal{O}_{(v,\sqrt{n}-1)},\dots,\mathcal{O}_{(v,1)}.
\]
We now claim that for each $v\in V_2$, the algorithm will add $\Omega(\log n)$ edges in expectation. This is because each edge added by the algorithm is a random guess for $\pi_v(1)$ and this edge cuts down the number of unsatisfied $\mathcal{O}_{(v,i)}$ by half in expectation. This means the algorithm adds $\Omega(n+n\log n)$ edges. This gives us the desired result because $\mathrm{OPT}=O(n)$.
\qed
\end{proof}

Now we study the online problem when it is known that an optimal graph can be a star or a path.  These special cases are challenging in their own right and are
often studied in the literature to develop more general techniques~\cite{angluinAR2015network}.

\subsection{Stars}\label{subsubsec:onlineStars}
\begin{theorem}\label{thm:onlineStarLowerAndUpper}
The optimal competitive ratio for the \textbf{Online \NCOC} problem when the algorithm knows that an optimal solution forms a \textbf{star} is asymptotically $3/2$.
\end{theorem}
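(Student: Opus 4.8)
The plan is to prove matching bounds: an algorithm with competitive ratio $3/2 + o(1)$ and a lower bound of $3/2 - o(1)$. I would start with a structural observation. Using the equivalence between an ordered constraint and a chain of connectivity constraints noted in the introduction, a family of ordered constraints is satisfiable by a star exactly when some vertex $c$ occurs in one of the first two positions of every constraint; the star centered at such a $c$, spanning precisely the set of vertices that occur in some constraint, is then optimal, so $\mathrm{OPT} = N-1$ where $N$ is the number of distinct vertices appearing in the constraints. I would also record two easy facts used throughout: (i) the full star at $c$ satisfies every constraint in which $c$ appears in position $1$ or $2$; and (ii) any ``generalized star'' on $\{x,y\}$ -- the edge $\{x,y\}$ together with, for every other vertex $w$, at least one of $\{x,w\},\{y,w\}$ -- satisfies every constraint whose first two vertices are exactly $\{x,y\}$.

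For the upper bound I would analyze the following deterministic algorithm. The first constraint reveals its first two vertices $\{x,y\}$, the only two candidate centers. While every constraint read so far has first two vertices equal to $\{x,y\}$ (Phase~1), keep the edge $\{x,y\}$ and attach each newly appearing vertex by one edge to whichever of $x,y$ currently carries fewer leaves, keeping the two sides balanced; by fact (ii) this satisfies all Phase~1 constraints at a cost of one edge per vertex. The first constraint whose first two vertices differ from $\{x,y\}$ pins down the true center, say $x$ (Phase~2); from then on satisfy each new constraint minimally, always adding edges incident to $x$. Since in Phase~2 the center $x$ is always in position $1$ or $2$, every forced edge is incident to $x$, so each vertex first seen in Phase~2 costs one edge and each of the $|B|$ Phase~1 leaves that had been attached to $y$ costs at most one extra ``repair'' edge. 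If $p \le N-2$ is the number of leaves seen in Phase~1, balancing gives $|B| \le \lceil p/2\rceil$, hence $\mathrm{ALG} \le (N-1) + |B| \le (N-1) + \lceil (N-2)/2\rceil \le \tfrac32(N-1)$; and if Phase~2 never occurs the cost is exactly $N-1$. So the competitive ratio is at most $\tfrac32$.

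For the lower bound the adversary first issues the single constraint $(v_1,\dots,v_n)$ on all $n$ vertices. Satisfying it forces the algorithm's graph $G$ to be connected on those vertices, so $m := |E(G)| \ge n-1$, while leaving the center undetermined between $v_1$ and $v_2$. The adversary then declares the center to be whichever of $v_1,v_2$ has smaller degree in $G$ (against a randomized algorithm, an oblivious adversary instead fixes this choice uniformly at random in advance, independently of $G$) and issues the pair constraints $(c,w)$ for all $w\ne c$, each forcing the edge $\{c,w\}$. The instance is solved optimally by the star at $c$, so $\mathrm{OPT} = n-1$, while the algorithm ends holding $G$ plus the full star at $c$; writing $D := \deg_G(v_1) + \deg_G(v_2) \le m+1$ and noting $\deg_G(c) \le D/2$, its cost is at least $m + (n-1) - \deg_G(c) \ge m/2 + n - 3/2 \ge \tfrac{3n}{2} - 2$ (in expectation, in the oblivious version). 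Hence the ratio is at least $\tfrac32 - o(1)$, and together with the upper bound the optimal ratio is asymptotically $3/2$.

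I expect the crux to be the upper bound analysis, specifically the charging argument showing that once the center is revealed the previously built edges that become useless number at most half of $\mathrm{OPT}$ -- this balancing is exactly what separates $3/2$ from the naive $2$. The delicate points there are that no Phase~2 constraint can force an edge not incident to the revealed center (so Phase~2 pays one edge per new vertex and at most one repair edge per mis-assigned Phase~1 leaf), that the promise of star-realizability rules out a later constraint contradicting the Phase~2 center, and that the balancing rule indeed keeps $|B| \le \lceil p/2\rceil$. The structural characterization and facts (i)--(ii) should be routine, and the lower bound computation is short once the adversary is in place.
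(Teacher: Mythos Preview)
Your proposal is correct. The upper-bound algorithm and its analysis are essentially identical to the paper's: balance newly seen leaves between the two candidate centers $v_1,v_2$ until a constraint reveals the true center, then complete the star; the extra cost is at most $\lceil(N-2)/2\rceil$ mis-assigned leaves. Your write-up is in fact more careful than the paper's, since you make explicit the structural fact that the center must occupy position~1 or~2 of every constraint and use it to justify both that Phase~2 is triggered exactly when the center is determined and that every Phase~2 edge can be taken incident to the center.

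The lower bound, however, is obtained by a genuinely different adversary. The paper issues the $n-2$ short constraints $(v_1,v_2,v_i)$, forcing the algorithm's graph to already be a ``double star'' on $\{v_1,v_2\}$, and then simply counts mis-assigned leaves after declaring the sparser side to be the center. You instead issue the single long constraint $(v_1,\dots,v_n)$, which only forces connectivity, and then recover the $3/2$ bound via the degree inequality $\deg_G(v_1)+\deg_G(v_2)\le m+1$ combined with $m\ge n-1$. Your route is slightly more general (it does not rely on the algorithm's graph having any particular shape after the first phase) and also makes the oblivious-adversary version immediate by randomizing the center; the paper's route is more elementary in that the counting is direct once the double-star shape is in hand. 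Both arguments are short and yield the same asymptotic ratio.
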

\begin{proof}
For lower bound, we note that the adversary can simply give $\mathcal{O}_{i}=\paren{v_1,v_2,v_i}$, $3\leq i\leq n$ obliviously for the first $n-2$ rounds. Then an algorithm, besides adding $\set{v_1,v_2}$ in the first round, can only choose from adding either $\set{v_1,v_i}$ or $\set{v_2, v_i}$, or both in each round. After the first $n-2$ rounds, the adversary counts the number of $v_1$ and $v_2$'s neighbors, and chooses the one with fewer neighbors, say $v_1$, to be the center by adding $(v_1, v_i)$ for some $3\leq i\leq n$. Since the algorithm has to add at least $\ceil{(n-2)/2}$ edges that are unnecessary in the hindsight, we get an asymptotic lower bound $3/2$. 

For upper bound, assume that the first ordered constraint is $\mathcal{O}_1$ is $(v_1, v_2,\dots)$, the algorithm works as follows:
\begin{enumerate}
\item It adds $\set{v_1,v_2}$ in the first round.
\item Then for any constraint that starts with $v_1$ and $v_2$, it splits the remaining vertices in the constraint (other than  $v_1$ and $v_2$) into two sets of sizes differing by at most $1$, and connects each vertex in first set to $v_1$ and each vertex in the other set to $v_2$.
\item Upon seeing a constraint that does not start with $v_1$ and $v_2$, which reveals the center of the star, it connects the center to all vertices that are not yet connected to the center. 
\end{enumerate}
Since the algorithm adds, at most $n/2-1$ edges to the wrong center, this gives us an asymptotic upper bound $3/2$, which matches the lower bound.
\qed
\end{proof}

\subsection{Paths}\label{subsubsec:paths}

In the next two theorems, we give matching lower and upper bounds (in the limit) for path graphs.

\begin{theorem}\label{thm:onlinePathLower}
The competitive ratio for the \textbf{Online \NCOC} problem when the algorithm knows that the optimal solution forms a \textbf{path} has an asymptotic lower bound of $2$.
\end{theorem}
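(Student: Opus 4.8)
The plan is to establish the bound via an oblivious adversary together with Yao's principle: I would exhibit a distribution over constraint sequences, each satisfiable by a single spanning path (so that $\mathrm{OPT}=n-1$), and show that it forces every deterministic online algorithm to buy $(2-o(1))(n-1)$ edges in expectation. The adversary works in two phases. In a \emph{commitment phase} it issues ordered constraints whose union is satisfiable only by paths from a large, structured family $\mathcal{P}$; the key properties are that merely satisfying these constraints forces the algorithm to add a connected spanning subgraph $G_{1}$ (hence at least $n-1$ edges), and that this phase is arranged so as to carry almost no information about which member of $\mathcal{P}$ will eventually be revealed. In a \emph{revelation phase} the adversary draws the true path $P^{*}$ from $\mathcal{P}$ and then forces each of its $n-1$ edges by a two-vertex constraint $(v_{i},v_{j})$, which by definition forces the edge $\{v_{i},v_{j}\}$. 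Since edges are never removed, the algorithm ends with a graph containing $G_{1}\cup E(P^{*})$, so its cost is at least $|E(G_{1})|+|E(P^{*})|-|E(G_{1})\cap E(P^{*})|\geq 2(n-1)-|E(G_{1})\cap E(P^{*})|$.

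Everything then reduces to proving that the expected overlap $\mathbf{E}\big[|E(G_{1})\cap E(P^{*})|\big]$ is $o(n)$ for every connected spanning $G_{1}$ the commitment phase can force. I would analyze this through the ``attachment'' view of $P^{*}$: if the revelation phase exposes $P^{*}$ vertex by vertex and $q(j)$ denotes the already-exposed vertex to which the $j$-th exposed vertex of $P^{*}$ is joined, then, since $G_{1}$ must connect each new vertex backward to an earlier one, $G_{1}$ contributes to the overlap at step $j$ exactly when its chosen back-neighbour of that vertex equals $q(j)$; hence $\mathbf{E}\big[|E(G_{1})\cap E(P^{*})|\big]\leq \sum_{j}\max_{i}\Pr[q(j)=i]$. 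The target is to make each $q(j)$ close to uniform over its $j-1$ candidates, so that this sum is $O(\log n)$ and therefore $\mathrm{ALG}\geq(2-o(1))\mathrm{OPT}$.

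The main obstacle is exactly this last point, and it is where the ordered model resists. A single ordered constraint that forces a set of vertices to be contiguous in the path also pins down the interval-growing order of that set, so $q(j)$ becomes biased toward the most recently placed vertices, $\max_{i}\Pr[q(j)=i]=\Omega(1)$ for a constant fraction of $j$, and one only gets competitive ratio $3/2$ — exactly the star bound of Theorem~\ref{thm:onlineStarLowerAndUpper}, not $2$. To break past $3/2$ I would reveal the block structure recursively, mirroring the pq-tree machinery behind the matching upper bound in the next theorem: maintain a pq-tree describing the paths still consistent with what has been revealed, and at each step refine a P-node by an ordered constraint that still leaves $\omega(1)$ of its children mutually unordered, so the algorithm is forced to commit an inter-block connection while $\omega(1)$ of its choices remain equally likely to be correct. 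Proving that this recursion keeps $\sum_{j}\max_{i}\Pr[q(j)=i]=o(n)$ — equivalently, that on average over the $\Theta(n)$ forced attachments no single back-neighbour is correct with non-vanishing probability — is the technical heart of the argument; an alternative route is a self-improving recursion in which gluing two hard sub-instances of half the size, while hiding which pair of endpoints is glued, provably pushes the ratio toward $2$. Either way, combining $\mathrm{OPT}=n-1$ with $\mathrm{ALG}\geq(2-o(1))(n-1)$ yields the claimed asymptotic lower bound of $2$, matching the upper bound of the following theorem.
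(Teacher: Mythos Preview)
Your proposal has a genuine gap: you correctly diagnose that the naive two-phase oblivious construction stalls at $3/2$, and you then \emph{propose} to push past it by a recursive pq-tree-guided refinement or a self-improving gluing recursion, but neither construction is actually carried out. The entire quantitative content of the theorem lies in showing $\sum_{j}\max_{i}\Pr[q(j)=i]=o(n)$ for your intended distribution, and you leave this as ``the technical heart of the argument'' without supplying it. As written, the proposal proves only the $3/2$ bound you already flagged as insufficient.

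More importantly, you have made the problem harder than the theorem requires. Nothing in the statement asks for an oblivious adversary; the paper's proof uses an \emph{adaptive} one and is almost immediate. The adversary first issues the single ordered constraint $(v_{1},v_{2},\dots,v_{n})$, which forces every $v_{i}$ with $i\ge 2$ to acquire at least one neighbour among $\{v_{1},\dots,v_{i-1}\}$ (call this its \emph{pre-degree}). It then walks $i=3,\dots,n$: whenever $v_{i}$ currently has pre-degree~$1$, the adversary picks any path $P$ still consistent with all constraints so far, takes an endpoint $u$ of $P$ that the algorithm has not yet joined to $v_{i}$, and issues the two-vertex constraint $(v_{i},u)$, forcing the edge $\{v_{i},u\}$. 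One checks inductively that a consistent path survives each such step, so $\mathrm{OPT}=n-1$ throughout, while the algorithm finishes with every $v_{3},\dots,v_{n}$ having pre-degree at least $2$, hence at least $2n-3$ edges. No randomness, no overlap estimate, no recursion.

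So the fix is not to complete your recursive construction but to drop the oblivious restriction: an adaptive adversary that simply forces a second backward edge at each vertex already yields the $2$ bound.
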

\begin{proof}
Fix an arbitrary ordering of the vertices $\{v_1, v_2, v_3,\dots, v_{n}\}$. For $3\leq i\leq n$, define the \textbf{pre-degree} of a vertex $v_i$ to be the number of neighbors $v_i$ has in $\{v_1, v_2, v_3,\dots, v_{i-1}\}$. Algorithm~\ref{alg:starLowerBound} below is a simple strategy the adversary can take to force $v_3,\dots, v_n$ to all have pre-degree at least $2$. Since any algorithm will add at least $2n-3$ edges, this gives an asymptotic lower bound of $2$. 
\vspace{-.5cm}
\begin{algorithm}[!ht]
\caption{{Forcing pre-degree to be at least $2$}}
\label{alg:starLowerBound}
	\begin{algorithmic}
	\STATE Give ordered constraint $\mathcal{O}=(v_1, v_2, v_3,\dots, v_n)$ to the algorithm;
	\FOR {$i=3$ to $n$} 
		\IF {the pre-degree of $v_i$ is at least $2$} \STATE continue;
		\ELSE
			\STATE pick up at random a path (say $P_{i}$) that satisfies all the constraints up to this round and an endpoint $u$ of the path that is not connected to $v_i$, and gives the algorithm the constraint $\paren{v_i, u}$;
		\ENDIF
	\ENDFOR
	\end{algorithmic}
\end{algorithm}
\vspace{-.5cm}
Suppose $P_{i}$ was the path picked in round $i$ (i.e. $P_i$ satisfies all constraints upto round $i$). Then, $P_i$ along with the edge $\paren{v_i, u}$ is a path that satisfies all constraints upto round $i+1$. Hence by induction, for all $i$, there is a path that satisfies all constraints given by the adversary upto round $i$.
\qed
\end{proof}

\begin{theorem}\label{thm:onlinePathUpper}
The competitive ratio for the \textbf{Online \NCOC} problem when the algorithm knows that the optimal solution forms a \textbf{path} has an asymptotic upper bound of $2$.
\end{theorem}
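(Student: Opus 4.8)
The plan is to recast path-realizability as a family of interval (consecutiveness) constraints, to maintain online a PQ-tree that describes exactly the feasible path orders, and then to use a potential function on PQ-trees to amortize the cost of the adjacencies that later turn out to be wrong. The starting observation is that a path $P$ on the vertices seen so far satisfies an ordered constraint $\mathcal{O}=(v_{k_1},\dots,v_{k_s})$ exactly when, in the linear order on the vertices induced by $P$, every prefix $\{v_{k_1},\dots,v_{k_i}\}$ with $2\le i\le s$ occupies a contiguous block, because the connected induced subgraphs of a path are precisely its subpaths. Hence after $t$ constraints the set of linear orders realizable by a feasible path is exactly the set represented by the PQ-tree $T_t$ obtained from the trivial tree on the currently seen vertices by successively \emph{reducing} it with the consecutiveness requirements ``$\{v_{k_1},\dots,v_{k_i}\}$ is consecutive'' coming from the constraints~\cite{boothL1976testing}. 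Moreover, since an optimal feasible graph is a path and it must span all $N$ distinct vertices that ever appear (each such vertex is incident to an edge in any feasible solution), we have $\mathrm{OPT}=N-1$, so it suffices to add at most $2N$ edges in total.

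The algorithm keeps $T_t$ together with an edge set $E_t$ that satisfies all constraints so far and is ``consistent'' with $T_t$ (for instance, $E_t$ always contains a path that $T_t$ permits). When $\mathcal{O}_{t+1}$ arrives it (i) inserts the newly appearing vertices of $\mathcal{O}_{t+1}$ into the tree as free leaves, (ii) performs the PQ-tree reduction for the prefixes of $\mathcal{O}_{t+1}$ to obtain $T_{t+1}$, and (iii) adds the fewest edges needed so that $E_{t+1}$ is again consistent with $T_{t+1}$ and satisfies $\mathcal{O}_{t+1}$, reusing the edges of $E_t$ wherever the reduction left the relevant part of the tree untouched. The key structural fact to verify is that the edges that must be added in step (iii) all lie inside the portion of the tree that the reduction actually restructured, and that they arise for only two reasons: a freshly inserted vertex needs its first incident edge, or a P-node becomes (partly) linearized and the order it now commits to disagrees with an edge already placed.

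Next I would introduce a potential $\Phi(T)\ge 0$, a weighted count over the internal nodes of $T$ that charges roughly $d-1$ for a P-node with $d$ children and charges little for Q-nodes, normalized so that $\Phi(T)=O(N)$, $\Phi$ vanishes on a tree that pins a path down uniquely up to reversal, and every free-leaf insertion raises $\Phi$ by at most $1$. The heart of the argument is the amortized inequality, for each round $t+1$, $(\text{edges added}) + \Phi(T_{t+1}) - \Phi(T_t) \le 2\cdot(\text{new vertices in round } t+1)$, obtained by walking through the Booth--Lueker reduction templates and checking case by case that each P-to-Q conversion, node split, and regrouping of children frees at least as much potential as the number of edges the algorithm is thereby forced to (re)route, while a leaf insertion costs at most one edge and at most one unit of potential. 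Summing over all rounds and using $\Phi\ge 0$, $\Phi(T_0)=0$, and $\sum(\text{new vertices})=N$ gives $\mathrm{ALG}\le 2N$, hence a competitive ratio at most $2N/(N-1)\to 2$, which together with Theorem~\ref{thm:onlinePathLower} yields the asymptotically optimal ratio~$2$.

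The main obstacle is precisely this template-by-template verification, i.e.\ pinning down the weights in $\Phi$ so that simultaneously $\Phi\ge0$, $\Phi=O(N)$, $\Phi$ grows by at most one per inserted leaf, and $\Phi$ drops by \emph{at least} the number of re-routed edges at \emph{every} reduction step with no additive slack (so that the bound does not degrade after the superlinearly-many reduction calls). The delicate cases are those where a vertex first appears deep inside a long constraint and must still be charged a single edge, and where a reduction splits a node's children across the pertinent/non-pertinent boundary; carrying the re-routing invariant through these cases, and keeping the constant in step (iii) at exactly $2$ per new vertex, is where the work lies.
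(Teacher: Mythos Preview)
Your plan is essentially the paper's own approach: translate each ordered constraint into its chain of prefix-consecutiveness constraints, maintain a PQ-tree, add edges as the Booth--Lueker templates fire, and bound the total via a potential on the tree. Two small points are worth flagging. First, the specific weight you suggest (about $d-1$ per P-node, negligible for Q-nodes) does not survive the template check: e.g.\ under P2 the sum $\sum_p(c(p)-1)$ is unchanged while one edge is added, so no potential is released. The paper instead takes $\Phi(T)=2\sum_{p\in P}c(p)-3|P|+|Q|$ (i.e.\ $2c(p)-3$ per P-node and $+1$ per Q-node), and a short table over the templates P2--P6, Q2 shows $-\Delta\Phi\ge(\text{edges added})$ in every case; the constant $2$ you are hunting for sits in the coefficient of $\sum c(p)$, not in the per-round inequality. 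Second, your dynamic leaf-insertion layer is unnecessary and is where you worry most: the paper simply starts the PQ-tree with all $n$ vertices (plus a dummy) under a single root P-node, so $\Phi$ begins at $2n-1$, ends at $2$, and the global drop bounds the edge count by $2n-3$ directly, with no per-round ``$2\times(\text{new vertices})$'' accounting needed.
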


\begin{proof}
For our algorithm matching the upper bound, we use the pq-trees, introduced by Booth and Lueker~\cite{boothL1976testing}, which keep track all consistent permutations of vertices given contiguous intervals of vertices.
Our analysis is based on ideas from Angluin~et~al.~\cite{angluinAR2015network}, who also 
use pq-trees for analyzing the general problem.\footnote{Angluin~et~al.~\cite{angluinAR2015network} have a small error in their argument because their potential function fails to explicitly consider the number of p-nodes, which creates a problem for some of the pq-tree updates. We fix this, without affecting their asymptotic bound. For the ordered constraints case, we are also able to obtain a much finer analysis.}

 A \textbf{pq-tree} is a tree whose leaf nodes are the vertices and each internal node is either a \textbf{p-node} or a \textbf{q-node}.
\begin{itemize}
\item A \textbf{p-node} has a two or more children of any type. The children of a p-node form a contiguous interval that can be in any order.
\item A \textbf{q-node} has three or more children of any type. The children of a q-node form a contiguous interval, but can only be in the given order of its inverse.
\end{itemize}
Every time a new interval constraint comes, the tree update itself by identifying any of the eleven patterns, P0, P1,$\dots$, P6, and Q0, Q1, Q2, Q3, of the arrangement of nodes and replacing it with each correspondent replacement. The update fails when it cannot identify any of the  patterns, in which case the contiguous intervals fail to produce any consistent permutation. We refer readers to Section 2 of Booth and Lueker~\cite{boothL1976testing} for a more detailed description of pq-trees.

The reason we can use a pq-tree to guide our algorithm is because of an observation made in Section~\ref{sec:introduction} that each ordered constraint $\paren{v_1,v_2, v_3,\dots, v_{k-1}, v_k}$ is equivalent to $k$ interval constraints $\set{v_1,v_2}, \set{v_1,v_2,v_3},\cdots,\set{v_1,\dots,v_{k-1}},$ $\set{v_1,\dots,v_{k-1}, v_k}.$ So upon seeing one ordered constraints, we reduce the pq-tree with the equivalent interval constraints, \emph{in order}. Then what our algorithm does is simply to add edge(s) to the graph every time a pattern is identified and replaced with its replacement, so that the graph satisfies all the seen constraints. Note that to reduce the pq-tree with one interval constraint, there may be multiple patterns identified and hence multiple edges may be added. 

Before running into details of how the patterns determine which edge(s) to add, we note that, without loss of generality, we can assume that the the algorithm is in either one of the following two stages.
\begin{itemize}
\item The pq-tree is about to be reduced with $\set{v_1,v_2}$.
\item The pq-tree is about to be reduced with $\set{v_1,\dots, v_k}$, when the reductions with $\set{v_1,v_2},\cdots$, $\set{v_1,\dots,v_{k-1}}$ have been done.
\end{itemize}

Because of the structure of constraints discussed above, we do not encounter all pq-tree patterns in their full generality, but in the special forms demonstrated in Table~\ref{tab:pq-treePatternReplacement}. Based on this, we make three important observations which can be verified by carefully examining how a pq-tree evolves along with our algorithm.
\begin{enumerate}
\item The only p-node that can have more than two children is the root.
\item At least one of the two children of a non-root p-node is a leaf node.
\item For all q-nodes, there must at least one leaf node in any two adjacent children. Hence, Q3 doesn't appear.
\end{enumerate}

Now we describe how the edges are going to be added. Note that a pq-tree inherently learns edges that appear in optimum even when those edges are not forced by constraints. Apart from adding edges that are necessary to satisfy the constraints, our algorithm will also add any edge that the pq-tree inherently learns. For all the patterns except Q2 such that a leaf node $v_k$ is about to be added as a child to a different node, we can add one edge joining $v_k$ to $v_{k-1}$. For all such patterns except Q2, it is obvious that this would satisfy the current constraint and all inherently learnt edges are also added. For Q2, the pq-tree could learn two edges. The first edge is $(v_k,v_{k-1})$. The second one is an edge between the leftmost subtree of the daughter q-node (call $T_l$) and the node to its left (call $v_l$). Based on Observation 3, $v_l$ is a leaf. But based on the algorithm, one of these two edges is already added. Hence, we only need to add one edge when Q2 is applied. For P5, we add the edge as shown in Table~\ref{tab:pq-treePatternReplacement}.

\begin{small}
\begin{center}
\begin{longtable}{m{.9cm}| >{\centering\arraybackslash} m{5.4cm} >{\centering\arraybackslash} m{5.4cm}}
&Pattern & Replacement\\ \hline
P2& \includegraphics[scale=.5]{./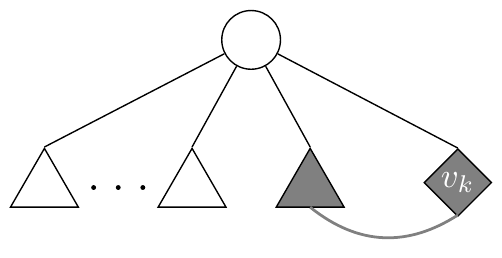} & \includegraphics[scale=.5]{./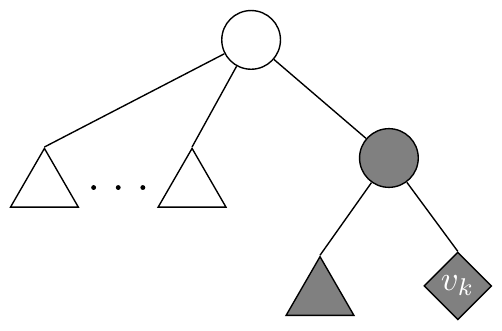}\\
P3&\includegraphics[scale=.5]{./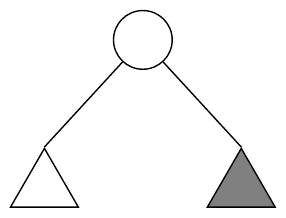} & \includegraphics[scale=.5]{./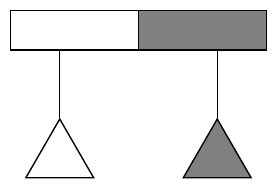}\\
P4(1)&\includegraphics[scale=.5]{./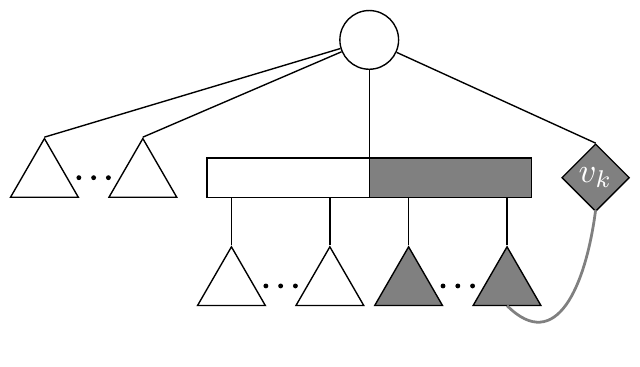} & \includegraphics[scale=.5]{./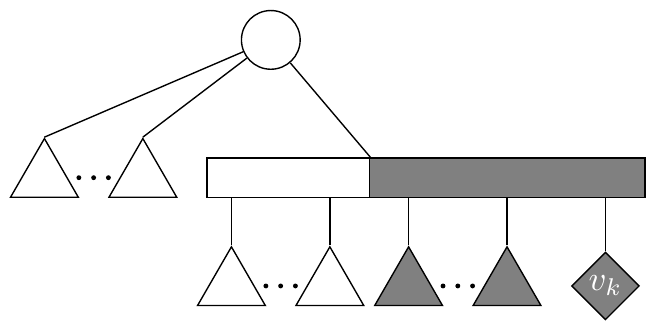}\\
P4(2)&\includegraphics[scale=.5]{./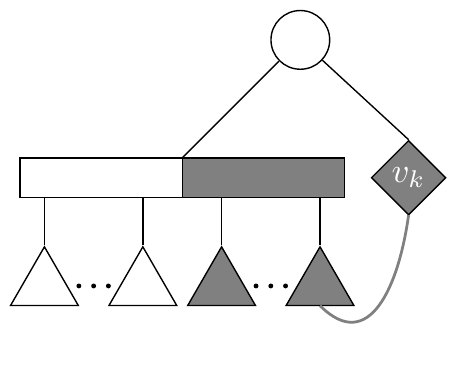} & \includegraphics[scale=.5]{./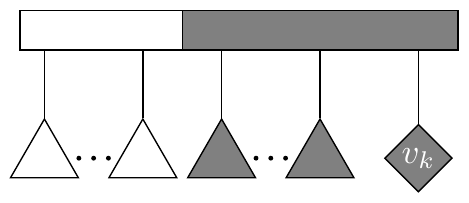}\\
P5&\includegraphics[scale=.5]{./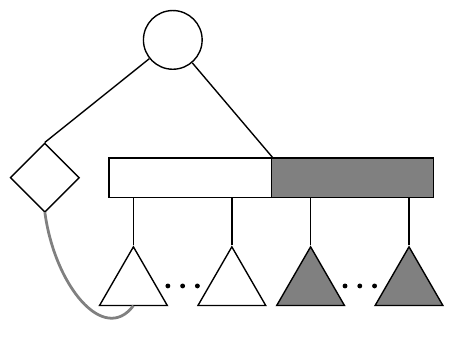} & \includegraphics[scale=.5]{./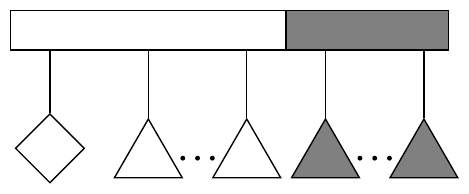}\\
P6(1)&\includegraphics[scale=.5]{./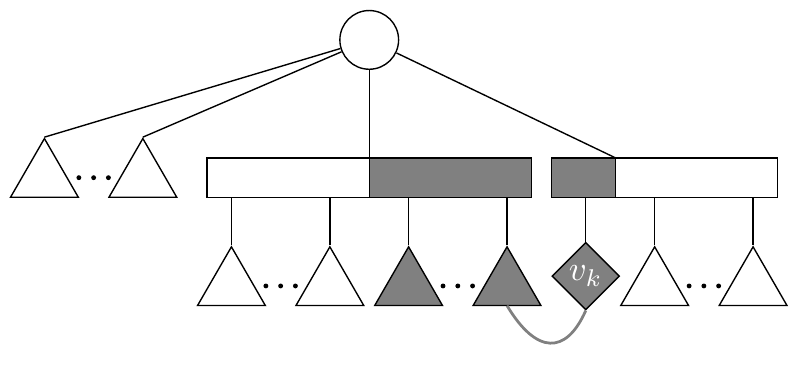} & \includegraphics[scale=.5]{./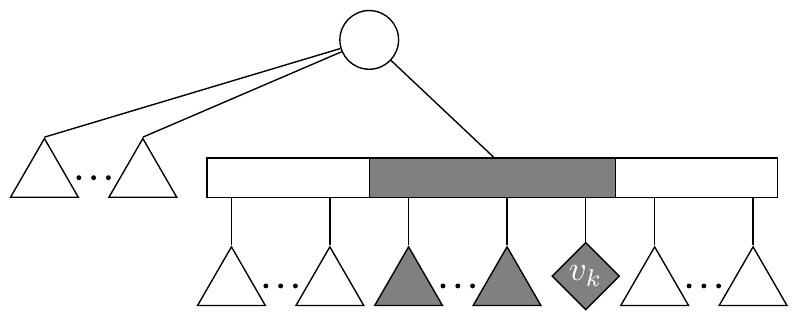}\\
P6(2)&\includegraphics[scale=.5]{./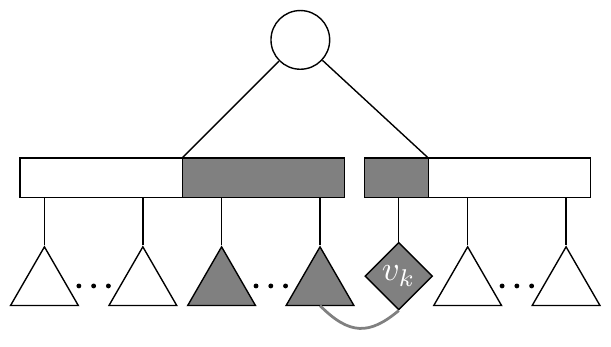} & \includegraphics[scale=.5]{./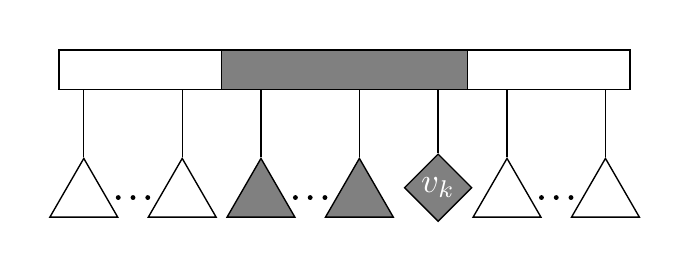}\\
Q2& \includegraphics[scale=.5]{./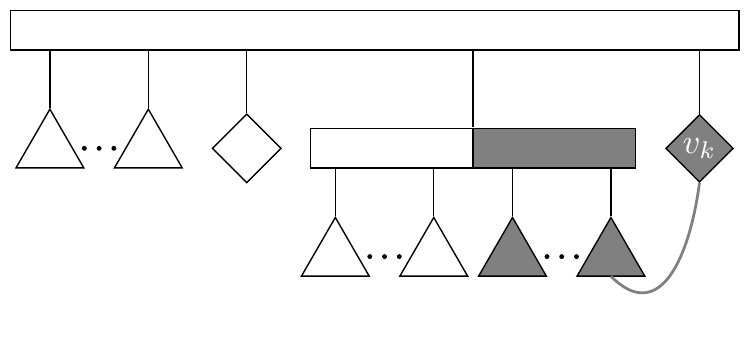} & \includegraphics[scale=.5]{./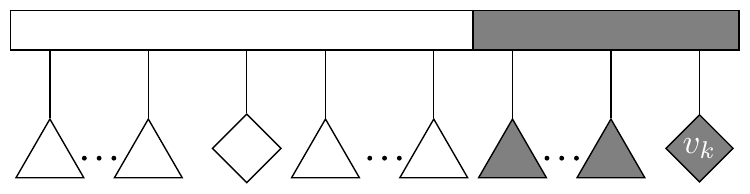}\\
\hline
\caption{Specific patterns and replacements that appear through the algorithm. P4(1) denotes the case of P4 where the top p-node is retained in the replacement and P4(2) denotes the case where the top p-node is deleted. The same is true for P6. P0, P1, Q0, and Q1 are just relabelling rules, and we have omitted them because no edges need to be added. We use the same shapes to represent p-nodes, q-nodes, and subtrees as in Booth and Lueker's paper~\cite{boothL1976testing} for easy reference, and we use diamonds to represent leaf nodes. }
\label{tab:pq-treePatternReplacement}
\end{longtable}
\end{center}
\end{small}

Let us denote by $P$ and $Q$ the sets of p-nodes and q-nodes, respectively, and by $c(p)$ the number of children node $p$ has. And let potential function $\phi$ of a tree $T$ be defined as
\[
\phi(T)=a\sum_{p\in P}c(p)+b|P|+c|Q|,
\]
where $a$, $b$, and $c$ are coefficients to be determined later. 

\begin{table}[ht]
\centering
\begin{tabular}{l|rrr|c|c}
& $\sum_{p\in {P}}c(p)$ & $|P|$ & $|Q|$& $-\Delta \Phi$& number of edges added\\ \hline
P2 & $1$ & $1$ & $0$ & $-a-b$& $1$\\
P3 & $-2$ & $-1$ & $1$ & $2a+b-c$ & $0$\\
P4(1) & $-1$ & $0$ & $0$ & $a$ & $1$\\
P4(2) & $-2$ & $-1$ & $0$ & $2a+b$ & $1$\\
P5 & $-2$ & $-1$ & $0$ & $2a+b$ & $1$\\
P6(1) & $-1$ & $0$ & $-1$ & $a+c$ & $1$\\
P6(2) & $-2$ & $-1$ & $-1$ & $2a+b+c$ & $1$\\
Q2 & $0$ & $0$ & $-1$ & $c$ & $1$\\
Q3 & $0$ & $0$ & $-2$ & $2c$ & $1$\\ \hline
\end{tabular}
\vskip .1in
\caption{How the terms in the potential function: $\sum_{p\in {P}}c(p)$, $|P|$, and $|Q|$ change according to the updates.}
\label{tab:pq-treePotentialFunc}
\vskip -.3in
\end{table}

We want to upper bound the number of edges added for each pattern by the drop of potential function. We collect the change in the three terms in the potential function that each replacement causes in Table~\ref{tab:pq-treePotentialFunc}, and we can solve a simple linear system to get that choosing $a=2$, $b=-3$, and $c=1$ is sufficient. For ease of analysis, we add a dummy vertex $v_{n+1}$ that does not appear in any constraint. Now, the potential function starts at $2n-1$ (a single p-node with $n+1$ children) and decreases to $2$ when a path is uniquely determined. Hence, the number of edges added
by the algorithm is $2n-3$, which gives the desired asymptotic upper bound.
\qed
\end{proof}



%
%
\bibliographystyle{plain}
\bibliography{paper}
\end{document}